\newcommand{\ALGtikzmarkcolor}{black}
\newcommand{\ALGtikzmarkextraindent}{4pt}
\newcommand{\ALGtikzmarkverticaloffsetstart}{-.5ex}
\newcommand{\ALGtikzmarkverticaloffsetend}{-.5ex}
\newcounter{ALG@tikzmark@tempcnta}
\newcommand\ALG@tikzmark@start{%
    \global\let\ALG@tikzmark@last\ALG@tikzmark@starttext%
    \expandafter\edef\csname ALG@tikzmark@\theALG@nested\endcsname{\theALG@tikzmark@tempcnta}%
    \tikzmark{ALG@tikzmark@start@\csname ALG@tikzmark@\theALG@nested\endcsname}%
    \addtocounter{ALG@tikzmark@tempcnta}{1}%
}
\def\ALG@tikzmark@starttext{start}
\newcommand\ALG@tikzmark@end{%
    \ifx\ALG@tikzmark@last\ALG@tikzmark@starttext
    \else
        \tikzmark{ALG@tikzmark@end@\csname ALG@tikzmark@\theALG@nested\endcsname}%
        \tikz[overlay,remember picture] \draw[\ALGtikzmarkcolor] let \p{S}=($(pic cs:ALG@tikzmark@start@\csname ALG@tikzmark@\theALG@nested\endcsname)+(\ALGtikzmarkextraindent,\ALGtikzmarkverticaloffsetstart)$), \p{E}=($(pic cs:ALG@tikzmark@end@\csname ALG@tikzmark@\theALG@nested\endcsname)+(\ALGtikzmarkextraindent,\ALGtikzmarkverticaloffsetend)$) in (\x{S},\y{S})--(\x{S},\y{E});%
    \fi
    \gdef\ALG@tikzmark@last{end}%
}
\apptocmd{\ALG@beginblock}{\ALG@tikzmark@start}{}{\errmessage{failed to patch}}
\pretocmd{\ALG@endblock}{\ALG@tikzmark@end}{}{\errmessage{failed to patch}}
\newtheorem{theorem}{Theorem}
\newtheorem{definition}{Definition}
\newcommand{\cM}{\mathcal{M}}
\newcommand{\cA}{\mathcal{A}}
\newcommand{\cS}{\mathcal{S}}
\newcommand{\cT}{\mathcal{T}}
\newcommand{\cG}{\mathcal{G}}
\newcommand{\bX}{\mathbf{x}}
\newcommand{\bY}{\mathbf{y}}
\newcommand{\bV}{\mathbf{v}}
\newcommand{\bU}{\mathbf{u}}
\newcommand{\poly}{\mathsf{poly}}
\newcommand{\diag}{\mathsf{diag}}
\newcommand{\supp}{\mathsf{supp}}
\newcommand{\test}{\mathsf{test}}
\newcommand{\dec}{\mathsf{dec}}
\newcommand{\sfA}{\mathsf{A}}
\newcommand{\rTGT}{\mathrm{TGT}}
\newcommand{\rGTTI}{\mathrm{GTTI}}
\begin{document}

\title{A framework for generalized group testing with inhibitors and its potential application in neuroscience\\[.2ex] 
  {\normalfont\large 
	Thach V. Bui\IEEEauthorrefmark{1}, Minoru Kuribayashi\IEEEauthorrefmark{3}, Mahdi Cheraghchi\IEEEauthorrefmark{4}, and Isao Echizen\IEEEauthorrefmark{1}\IEEEauthorrefmark{2}}\\[-1.5ex]}

\author{\IEEEauthorblockA{\IEEEauthorrefmark{1}SOKENDAI (The \\Graduate University \\for Advanced \\Studies), Hayama, \\Kanagawa, Japan\\ bvthach@nii.ac.jp}
\and
\IEEEauthorblockA{\IEEEauthorrefmark{3}Graduate School\\ of Natural Science\\ and Technology, \\Okayama University, \\Okayama, Japan\\kminoru@okayama-u.ac.jp}
\and
\IEEEauthorblockA{\IEEEauthorrefmark{4}Department of \\Computing, Imperial \\College London, UK\\m.cheraghchi@imperial.ac.uk}
\and
\IEEEauthorblockA{\IEEEauthorrefmark{2}National Institute\\ of Informatics, \\Tokyo, Japan \\ iechizen@nii.ac.jp}}

\maketitle

\thispagestyle{plain}
\pagestyle{plain}

\begin{abstract}
The main goal of group testing with inhibitors (GTI) is to efficiently identify a small number of defective items and inhibitor items in a large set of items. A test on a subset of items is positive if the subset satisfies some specific properties. Inhibitor items cancel the effects of defective items, which often make the outcome of a test containing defective items negative. Different GTI models can be formulated by considering how specific properties have different cancellation effects. This work introduces generalized GTI (GGTI) in which a new type of items is added, i.e., hybrid items. A hybrid item plays the roles of both defectives items and inhibitor items. Since the number of instances of GGTI is large (more than 7 million), we introduce a framework for classifying all types of items non-adaptively, i.e., all tests are designed in advance. We then explain how GGTI can be used to classify neurons in neuroscience. Finally, we show how to realize our proposed scheme in practice.
\end{abstract}

\section{Introduction}
\label{sec:intro}

\subsection{Group Testing with inhibitors}
\label{sub:intro:GT}

\textbf{GROUP TESTING:} Identifying a small group of items satisfying \textit{specific properties} in a colossal group of $n$ items is the main problem in group testing. Such items are usually referred to as \textit{defective items}, and the other items are usually referred to as \textit{negative items}. The classification of group testing depends on the classification of defective items. Suppose that there are $n$ items indexed from $1$ to $n$ and that the defective set $D \subset [n] = \{1, \ldots, n \}$. A test on a subset of $n$ items is designed to determine whether the subset satisfies the properties. If the properties are satisfied, the test outcome is \textit{positive}. Otherwise, the test outcome is negative. In general, how $D$ is determined to be a defective set and how members of the defective set $D$ present in a test defines the test outcome. Note that the definition of $D$ is inseparable with the specific properties. Here we assume that the upper bound of the cardinality of $D$ is known, i.e., $|D| \leq d \ll n.$

There are two main approaches to test design: adaptive and non-adaptive. In adaptive design, there are several stages of testing and a test depends on the outcomes of previous tests. This approach usually achieves the minimum number of tests needed; the defective items are finally revealed by the last test. However, since there are several stages, it generally consumes lots of time. The non-adaptive design approach reduces the testing time because all tests are independent and designed a priori. However, this approach generally requires a larger number of tests. 

Another important concern that should be considered is noise in the test outcome. In noisy setting, a test outcome can flip from positive to negative and vice versa. The number of tests required in a noisy setting is usually larger than that in a noiseless setting.

The procedure for obtaining outcomes by testing subsets of items is called \textit{encoding} and that for classifying the items from the test outcomes is called \textit{decoding}. It is desirable to minimize the number of tests for the encoding procedure and minimize the time required for classifying the items. Other criteria are also considered for specific problems. The paradigm of the theoretical model of group testing is illustrated in Fig.~\ref{fig:GGT}. The practical model will be addressed in latter sections.

\begin{figure}[t]
\centering
\includegraphics[scale=0.545]{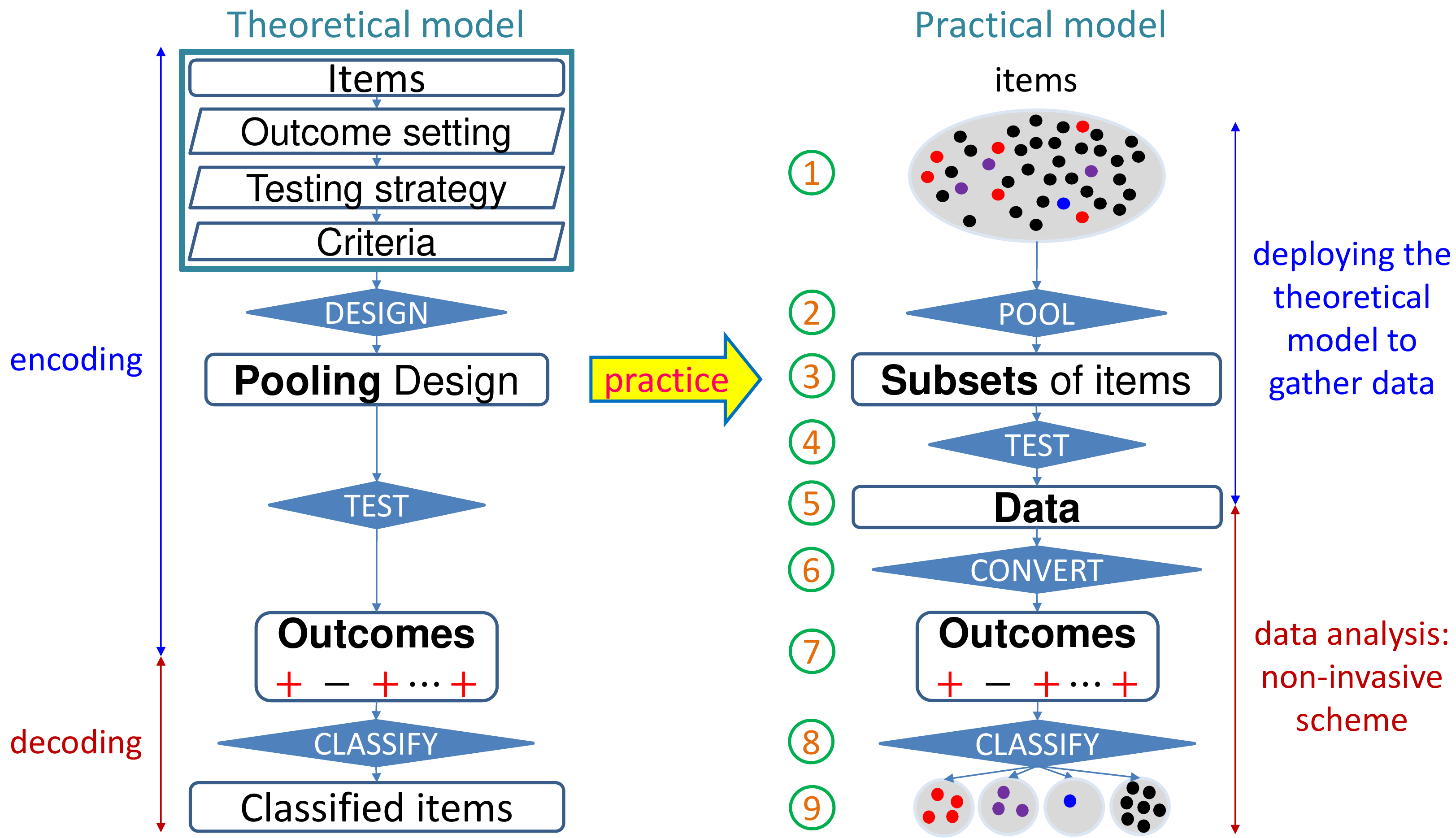}
\caption{Generalized group testing with inhibitors paradigm encompasses two procedures: encoding and decoding. For the theoretical model, the objective of encoding procedure is to create a pooling design (i.e., a measurement matrix) and then do tests on it to get outcomes. From the outcomes, the items are classified via the decoding procedure. There are nine steps in the practical model while steps 4 to 6 merge into one step in the theoretical model. In the practical model, a set of items might consist of \textcolor{red}{defective} items ($\color{red} \bullet$), \textcolor{violet}{inhibitors} ($\color{violet} \bullet$), \textcolor{blue}{hybrid} items ($\color{blue} \bullet$), and negative items ($\bullet$). Data gathered from tests must be converted into binary outcomes before proceeding the decoding procedure to classify the items.}
\label{fig:GGT}
\end{figure}

Based on the types of defective items, their classification is as follows and will be addressed in the following sections.
\begin{align}
\begin{tabular}{@{}c@{}} Complex group testing \\ (Complex defectives) \end{tabular} \xrightarrow{\text{reduce}} \begin{tabular}{@{}c@{}} Threshold group testing \\ (Threshold defectives) \end{tabular} \xrightarrow{\text{reduce}} \begin{tabular}{@{}c@{}} Classical group testing \\ (Classical defectives) \end{tabular} \nonumber
\end{align}

\subsubsection{Classical group testing}
\label{subsub:CGT}

In classical group testing (CGT), the outcome of a test on a subset of items is positive if the subset has at least one item in $D$, and negative otherwise. CGT has been intensively studied since its inception~\cite{dorfman1943detection}. With adaptive design, the number of tests is $O(d \log{n})$~\cite{du2000combinatorial}. However, this design is problematic because it can take too much time to carry out the tests. With non-adaptive design, the number of tests is $O(d^2 \log{n})$~\cite{d1982bounds,du2000combinatorial}. Porat and Rothschild~\cite{porat2008explicit} first proposed explicit nonadaptive schemes to achieve this bound. However, there is no sublinear decoding algorithm associated with their schemes. For exact reconstruction in a noisy setting, Ngo et al.~\cite{ngo2011efficiently} proposed a scheme for identifying up to $d$ defective items in time $\poly(d, \log{n}, z)$ with $O(d^2 \log{n} + dz)$ tests in the presence of up to $z$ erroneous outcomes. If false positives are allowed in the defective set, Cheraghchi~\cite{cheraghchi2013noise} proved that the number of tests can be as low as $O(d^{1 + o(1)} \log{n})$ with a decoding time of $\poly(d, \log{n})$. Following Cheraghchi's idea that the number of tests can be reduced if exact construction is not required, Cai et al.~\cite{cai2013grotesque} used probabilistic schemes that need $O(d \log{d} \cdot \log{n})$ tests to find the defective items in time $O(d(\log{n} + \log^2{d}))$ w.h.p. Using numerical results, Bui et al.~\cite{bui2018efficient} recently showed that $O \left(\frac{d^2 \log^2{n}}{W^2(d\log{n})} \right)$ is the minimum number of tests, unlike the previously reported~\cite{{ngo2011efficiently}}, where $W(x) = \Theta \left( \log{x} - \log{\log{x}} \right)$. The decoding time of this scheme is also low: $O(d^6 \log^6{n})$. CGT is widely used in various fields such as DNA library screening~\cite{ngo2000survey}, compressed sensing~\cite{Atia2012:Boolean}, graph constraining~\cite{cheraghchi2012graph}, and similarity searching~\cite{iscen2016efficient}.

\subsubsection{Threshold group testing}
\label{subsub:TGT}
In threshold group testing (TGT), given two integer parameters $0 \leq \ell < u \leq d$, the outcome of a test on a subset of items is negative if the subset has up to $\ell$ items in $D$, is positive if the subset has at least $u$ items in $D$, and arbitrary otherwise.

The two parameters $\ell$ and $u$ are called the lower threshold and upper threshold, respectively. This TGT is thus denoted as $\rTGT(\ell, u)$. Let $g = u - \ell - 1$ be the gap in $\rTGT(\ell, u)$. TGT has no gap when $g = 0$. When $u = 1$, TGT reduces to CGT. To avoid confusion with the definition of defective items given in section~\ref{subsub:CGT}, we call defective items in TGT \textit{threshold defective items.}

Damaschke~\cite{damaschke2006threshold} introduced TGT in 2006. By using $\binom{n}{u}$ non-adaptive tests, he showed that a set of positive items could have up to $g$ false positives and $g$ false negatives. The number of tests was then decreased to $O(d^{g + 2} \log{d} \cdot \log{\frac{n}{d}})$ by Cheraghchi~\cite{cheraghchi2013improved}. When the number of defective items is known, e.g., $d$, the number of tests can be reduced to $O(d^{1.5} \log \frac{n}{d})$ for $g = 0$~\cite{de2017subquadratic} or $O(g^2 d \log{n}) + O(d \log{\frac{1}{\epsilon}})$ for $\epsilon > 0$~\cite{chan2013near}. Although the number of tests is low, this approach is rarely applied because of the condition on the number of defectives items.

Most work on TGT has focus on the number of tests; there has been relatively little on the decoding procedure. Chen and Fu~\cite{chen2009nonadaptive} use $O \left( z \cdot \frac{(d + u - \ell)^{d + 1}}{u^u (d - \ell)^{d - \ell})} \log{\frac{n}{d + u - \ell}} \right)$ tests to find defective items in time $O(n^u \log{n})$ when there were at most $z$ erroneous outcomes. Since the number of tests and the decoding time become huge as $n$ increases, this approach is mostly impractical. Chan et al.~\cite{chan2013stochastic} presented a randomized algorithm for finding defective items with $O\left( \log{\frac{1}{\epsilon}} \cdot d\sqrt{u} \log{n}\right)$ tests in time $O(g^2 n\log{n} + n \log{\frac{1}{\epsilon}})$ given that the number of defective items is exactly $d$ and $u = o(d)$. Again, these conditions are too strict for practical applications, and the cost of decoding increases with $n$. Bui et al.~\cite{bui2018efficiently} recently proposed a scheme for finding up to $d$ defective items using $t = O \left(  \left( \frac{d}{u} \right)^u \left( \frac{d}{d -u} \right)^{d - u} d^3 \log{n} \cdot \log{\frac{n}{d}} \right)$ tests in sub-linear time $O( t \times \poly(d, \log{n}))$ when $g = 0$. The authors\cite{bui2019improved} later improved the number of tests to $t = O\left( \frac{d^4 \log^3{n}}{W^2(d\log{n})} \right)$ and the decoding time to $O( t \times \poly(d, \log{n})).$ Using stochastic approach, by setting $d = O(n^\beta)$ for $\beta \in (0, 1)$ and $u = o(d)$, Reisizadeh et al. \cite{reisizadeh2018sub} used $\Theta(\sqrt{u} d \ln^3{n} )$ tests to identify all defective items in time $O(u^{1.5} d \ln^4{n} )$ w.h.p with the aid of a $O(u \ln{n}) \times \binom{n}{u}$ look-up matrix. These settings are mostly unfeasible when $n$ or $u$ grows.

\subsubsection{Complex group testing}
\label{subsub:complexGT}

In complex group testing (CmplxGT), $D$ is decomposed into $c$ smaller subsets $D_1, \ldots, D_c$, such that:
\begin{itemize}
\item $D = D_1 \cup \ldots \cup D_c$.
\item Any $D_a$ is the defective set in $\rTGT(\ell_a, u_a)$, where $0 \leq \ell_a < u_a \leq u \leq d$ for $a = 1, \ldots, c$.
\end{itemize}

A test outcome on a subset of items is positive if the subset contains items in $D_a$ and the outcome of a test on these items is positive, for some $a \in [c]$. Otherwise, the test outcome is negative.

When every defective subset of $D$ has no gap, CmplxGT is called \textit{complex group testing without gap}. When $c = 1$, CmplxGT reduces to TGT. When $c = 1$ and $u_1 = 1$, CmplxGT reduces to CGT. To avoid confusion with the definitions of defective items given in sections~\ref{subsub:CGT} and~\ref{subsub:TGT}, we call defective items in CmplxGT \textit{complex defective items.}

CmplxGT orginated in molecular biology~\cite{torney1999sets}. In this setting, $\ell_a = u_a - 1 = |D_a| - 1$. Chen et al.~\cite{chen2008upper} restated this problem as complex group testing. There has been some work on CmplxGT~\cite{chang2010identification,chen2008upper,chin2013non}. Chen et al.~\cite{chen2008upper} showed that the number of non-adaptive tests is $O \left( z \left( \frac{d + u}{u} \right)^u \left( \frac{d + u}{d} \right)^d (d + u) \log{\frac{n}{d + u}} \right)$, where $z$ is the maximum number of erroneous outcomes. Without considering errors in the test outcomes, Chin et al.~\cite{chin2013non} improved this bound to $O \left( (c + d)^{c + d + 1} \log{n} /(c^c d^d) \right)$. These bounds increase as $u$ or $d$ increases.


\vspace{2mm}
\noindent
\textbf{INHIBITORS:} Recent advances in the definition of group testing have added a new type of item: inhibitors. The manner in which the outcome of a test is positive defines the type of defective item, and the manner in which the outcome is negative defines the type of inhibitor item. An item is considered to be an inhibitor if it interferes with the identification of the defective items. The inhibitor set is denoted as $H$, where $|H| \leq h \ll n$. Similar to the classification of defective items, there are three types of inhibitors: classical (dictator) inhibitors, threshold inhibitors, and complex inhibitors. If a subset does not contain any item in $H$, the test outcome on the subset depends on the properties of defective items presented in the subset. Therefore, we only describe what the outcome of a test on a subset of items is when the subset has at least one item in $H$. In this model, we have three sets out of $n$ items: defective set $D$ ($|D| \leq d \ll n$), inhibitor set $H$, and negative set $[n] \setminus D \cup H$. There are three classifications of inhibitor items:

\begin{align}
\text{Complex inhibitors} \xrightarrow{\text{reduce}} \text{Threshold inhibitors} \xrightarrow{\text{reduce}} \text{Classical inhibitors} \nonumber
\end{align}

For testing design, researchers often use a non-adaptive design to reduce testing time in group testing with inhibitors (GTI). For decoding, there are also two approaches: 1) identify defective items only and 2) identify both defective items and inhibitor items.

\subsubsection{Classical (dictator) inhibitors}
\label{subsub:inhi:classical}

In group testing with classical inhibitors (GTDI), if a subset of items contains at least one inhibitor item in $H$ then the test outcome on that subset is negative~\cite{chang2010identification,de1998improved,farach1997group,ganesan2015non,hwang2003error}. Let $z$ be the maximum number of erroneous outcomes and $\lambda = (d + h) \log{n}/W((d + h)\log{n}) + z$, where $W(x) = \Theta \left( \log{x} - \log{\log{x}} \right)$. The following details follow the setting that $D$ is a defective set in CGT and $H$ is an inhibitor set in GTDI. More specifically, the outcome of a test on a subset is positive if the subset has no inhibitor items and has at least one defective items, and negative otherwise. To identify defective items only, Chang et al.~\cite{chang2010identification} proposed a scheme using $O((d + h + z)^2\log{n})$ tests in time $O((d + h + z)^2 n\log{n})$. For identifying both defective items and inhibitors, they proposed a scheme using $O( z(d + h)^3\log{n})$ tests in time $O( e(d+h)^3 n \log_2{n})$. Without considering erroneous outcomes, Ganesan et al.~\cite{ganesan2015non} use $O((d + h) \log{n})$ tests to identify defective items in time $O((d + h)n \log{n})$ by using a probabilistic scheme. It took $O((d + h^2) \log{n})$ tests to identify both defective and inhibitor items in time $O((d + h^2)n \log{n})$. Bui el at.~\cite{bui2018sublinear} recently proposed a scheme that uses $\Theta \left( \lambda^2 \log{n} \right)$ tests to identify only the defective set $D$ in time $ O \left( \lambda^5 \log{n}/(d+h)^2 \right)$. They also proposed another scheme for identifying both defective and inhibitor items in time $O \left( d \lambda^6 \times \max \left\{ \lambda/(d+h)^2, 1 \right\} \right)$ using $O \left( \lambda^3 \log{n} \right)$ tests.

\subsubsection{Threshold inhibitors}
\label{subsub:inhi:thresh}

In group testing with threshold inhibitors (GTTI), the outcome of a test on a subset which has at least one item in $H$ satisfies the following conditions.

\begin{itemize}
\item If the number of inhibitor items in a subset is at least $ui$, the outcome of a test on the subset is negative.
\item If the number of inhibitor items in a subset is up to $li$, the outcome of a test on the subset depends only on the type and number of defective items.
\item If the number of inhibitor items in a subset is more than $li$ and less than $ui$, the outcome of a test on the subset depends on the type of defective items, the number of defective items, and the properties of the inhibitor items in the subset.
\end{itemize}

This threshold inhibitor model is denoted as $\rGTTI(li, ui)$. The two parameters, $li$ and $ui$, are the lower threshold and upper threshold. A gap is denoted as $gi = ui - li - 1$. When $ui = 1$, GTTI reduces to GTDI. To avoid confusion with the definition of inhibitor items given in section~\ref{subsub:inhi:classical}, we call inhibitor items in GTTI \textit{threshold inhibitor items.}

Two previous works~\cite{chang2010identification,chang2013threshold} specified GTTI with a formal definition of defective items. Both assumed that there is no gap in $\rGTTI(li, ui)$, i.e., $gi = 0$, and that there are up to $e$ erroneous outcomes. The first considered two models of defective items: CGT and CmplxGT without a gap. When $D$ is a defective set in CGT, there is a non-adaptive algorithm that can be used to classify all $n$ items. When $D$ is the defective set in CmplxGT, the authors can only recover complex defective items. The second work considered $D$ to be the defective set in TGT. Then all defective items can be identified in time $O \left( e n^d (d + h)^{u + 1} \log (n/(d + h)) \right)$.

\subsubsection{Complex inhibitors}
\label{subsub:inhi:cmplx} 

In group testing with complex inhibitors (GTCI), $H$ is decomposed into $ci$ smaller subsets $H_1, \ldots, H_{ci}$, such that:
\begin{itemize}
\item $H = H_1 \cup \ldots \cup H_{ci}$.
\item Any $H_a$ is the inhibitor set in $\rGTTI(li_a, ui_a)$, where $0 \leq li_a < ui_a \leq h$ for $a = 1, \ldots, ci$.
\end{itemize}

If a subset has at least $ui_a$ inhibitor items in $H_a$ for some $a \in [1, ci]$, then the outcome of a test on the subset is negative. Otherwise, the outcome of a test on a subset depends on the types and the number of inhibitors and defective items in the subset.

When $ci = 1$, GTCI reduces to GTTI. When $ci = 1$ and $ui_1 = 1$, GTCI reduces to GTDI. To avoid confusion with the inhibitor items defined in sections~\ref{subsub:inhi:classical} and~\ref{subsub:inhi:thresh}, we call defective items in GTCI \textit{complex inhibitor items.}

There has been only one work dealing with complex inhibitors up to date~\cite{zhao2016group}. The authors considered two error-tolerant models: CGT with complex inhibitors and TGT with complex inhibitors. Specifically, each small subset $H_a$ is the inhibitor set in $\rGTTI(ui_a - 1, ui_a)$ for $a = 1, \ldots, ci$. Their ultimate goal was to identify the defective $D$ efficiently while minimizing the number of tests. They could not identify both defective items and inhibitors items. Let $mi = \sum_{a = 1}^{ci} ui_a$, $ki = mi - ci + 1$, and let $z$ be the maximum erroneous outcomes. The defective set $D$ can be recovered using $O \left(z \log{n} \cdot \exp(d, mi, ki, h) \right)$ tests in time $O(n^{mi + u} \log{n})$.

\subsubsection{Hybrid items}
\label{subsub:inhi:hybrid}

To generalize GTI, we introduce a new type of item: hybrid items. A hybrid item can be either defective or inhibitory. Under \textit{certain conditions}, it is defective (resp., inhibitory) because it satisfies the properties of a defective item (resp., an inhibitory item). The formal definition of ``certain conditions'' is left for future work. Since there are three types of defective items and three types of inhibitor items, hybrid items consists of nine types.

\subsection{Action potentials in neuroscience}
\label{sub:AP}

When you listen to a funny story, you often laugh. Your nervous system consisting of the central nervous system (CNS) and the peripheral nervous system (PNS) is responsible for this action. The CNS is located in the brain and spinal cord and is encased in bone. Neurons in the PNS travel through or lie on top of muscle, organ, and skin tissue. The primary purpose of the CNS is to organize and analyze signals from the sensory and motor neurons of the PNS, allowing us to observe and react to the environment. The central purpose of the PNS is to follow the commands of the CNS by changing motor output.

The nervous system is mainly regulated by three types of neurons: sensory neurons, interneurons, and motor neurons. Sensory neurons conduct signals from inside and outside the body such as those responsible for taste and vision to the CNS via stimulus receptors. Motor neurons convey signals from the CNS to the effector cells such as the muscles and glands. Finally, interneurons, which are distributed entirely within the CNS, interconnect sensory neurons and motor neurons. There are approximately 86 billion neurons in a brain~\cite{herculano2009human}.

Now, going back to the mechanism underlying your laugh. The voice from the speaker reaches the sensory organs in your ear. The signals induced in those organs propagate to the sensory neurons, which connect with those in your spinal cord. The sensory neurons then generate signals that propagate to the interneurons in your cerebral cortex. The brain cortex processes the signals received and then decides to move facial muscles via motor neurons. As a result, you laugh.

The signals in the laugh reaction chain are \textbf{action potentials} (APs) in the cells. There is an electrical potential difference between the inside of a cell and the surrounding environment. When a neuron is at rest, its membrane potential is about $-70$mV. When a neuron is active, an AP is caused by rapid depolarization of the membrane beyond threshold. The threshold is typically about $-55$mV. Therefore, an AP is an ``\textbf{all or none}'' phenomenon. This means that once the membrane has become depolarized and reaches the threshold, an AP will occur.

Next we describe how an AP is generated in terms of neuron interaction. A neuron contacts and communicates with other neurons by creating special sites called \textit{synapses} using its axon and dendrites. Information, usually in the form of chemical substances called \textit{neurotransmitters}, generally flows in one direction, from a source neuron to a target neuron. The source neuron is said to be \textit{presynaptic}, and the target neuron is said to be \textit{postsynaptic}. To generate an AP in a postsynaptic neuron, several presynaptic neurons of a target neuron release neurotransmitters. The type(s) of neurotransmitters released by a presynaptic neuron defines its neuron type in terms of neurotransmitters. Most excitations in the cortex are generated by neurons releasing glutamate. Most inhibitions are generated by neurons releasing GABA. A more detailed explanation is available elsewhere~\cite{bear2016neuroscience}.

Normally, there are two classifications of neurons based on the postsynaptic potential: excitatory postsynaptic potential (EPSP) and inhibitory postsynaptic potential (IPSP). A neuron is \textit{excitatory} (resp., \textit{inhibitory}) if it makes EPSPs (resp., IPSPs) at its postsynaptic neurons. Most neurons are unable to play both excitatory and inhibitory roles~\cite{dale1935pharmacology}. However, some neurons can be both excitatory and inhibitory~\cite{chavas2003coexistence,uchida2014bilingual}. Such neuron are said to be \textit{hybrid}. Moreover, some neurons may play neither role for a certain stimulus, so we have another type of neuron: negative neurons. We thus consider four types of neurons \textit{for a stimulus}: excitatory neurons, inhibitory neurons, hybrid neurons, and negative neurons. Note that synaptic excitation and inhibition are inseparable events, even for the simplest sensory stimulus like a brief tone~\cite{tan2004tone}. Any imbalance in synaptic excitation and inhibition may cause disorders such as schizophrenia and epilepsy. Another important point is that a postsynaptic neuron can receive neurotransmitters from more than one neuron, i.e., multiple synaptic potentials merge within one postsynaptic neuron. This process is called \textit{synaptic integration}.

The main challenges related to APs are clarifying the mechanisms underlying how an AP is generated in a neuron and clarifying the mechanism of the interaction between neurons. A preliminary step in meeting these challenges is determining the neuron type corresponding to a stimulus. We have developed a scheme for classifying neurons for the stimulus.

\subsection{Contributions}
\label{sub:intro:contri}

In this work, we make four contributions. First, we generalize group testing with inhibitors by introducing a new type of item: hybrid items that can play the role of a defective item or that of an inhibitor item. Second, we present an encoding/decoding framework for generalized group testing with inhibitors (GGTI). Third, we introduce a mapping between GGTI and neuron classification. Finally, we show how to carry out experiments for classifying neurons by using GGTI in practice.

\subsection{Applications}
\label{sub:apps}
In this section, we outline two scenarios in which our contributions could play an important role and describe how they improve on earlier work.

\subsubsection{Medicine} Stimuli can be generated both inside and outside the body. Our objective is to locate which neurons are responsible for sensing/responding to a particular stimulus and then determine the types of those neurons. There are numerous potential applications of this capability. We mention only a few of them here. When a patient undergoes major surgery, the patient's body is usually completely anesthetized so that the patient is unconscious before and during surgery. The main post-operative task is to restore the patient's consciousness. If we could localize the neurons responsible for reaction in the part of the patient's body that is the surgical target, we could anesthetize only those neurons. This might obviate the need to restore consciousness because it might be possible for the patient to remain conscious during surgery. This would lessen the risk posed by surgery. This is the most optimistic application in treatment by using classified neurons.

Another potential application is the treatment of such disorders as acrophobia, schizophrenia, and epilepsy. Identification of the neurons responsible for these disorders would enable application of personalized treatment. A final example is the creation of a brain stimulus paradigm and then identifying the mechanism underlying that stimulus.

\subsubsection{Neural networks} Spike neural networks~\cite{hodgkin1952quantitative,gerstner2002spiking} have been intensively studied to mimic how a brain works. They have been applied in a wide range of machine learning, including computer vision~\cite{krizhevsky2012imagenet} and pattern recognition~\cite{hopfield1995pattern}. In biology~\cite{markram2015reconstruction}, the usual way is to identify the types of neurons by \textit{exhaustively examining} all neurons then build the connectivity between the neurons. In theoretical computer science, Lynch et al.~\cite{lynch2017computational} presented a computational model for investigating how many inhibitory neurons needed to be added into a network consisting of input neurons and corresponding output neurons. Identifying the firing neuron was their ultimate goal. In fact, the authors also needed to examine each neuron to achieve their goal.

Let consider each examination on a neuron as a test. Then the number of tests is linear to the number of neurons in the network in previous work. By using group testing, we could possibly reduce such number of tests. The details of this solution would be addressed in Sections~\ref{sec:neuro} and~\ref{sec:design}.

\subsection{Techniques}
\label{sub:intro:tech}

Our key techniques involve two concomitant variates: the tensor product and the divide-and-conquer strategy. The techniques are applied to the encoding and decoding of GGTI. The generalization of group testing with inhibitors and its mapping to neuron classification are done in terms of abstract formalization. Let $\circledcirc$ be the tensor product notation. Given an $f \times n$ matrix $\cA = (a_{ij})$ and an $s \times n$ matrix $\cS = (s_{ij})$, the tensor product of $\cA$ and $\cS$ is defined as
\begin{align}
\mathcal{R} = \cA \circledcirc \cS := \begin{bmatrix}
\cS \times \mathrm{diag}(\cA_{1, *}) \\
\vdots \\
\cS \times \mathrm{diag}(\cA_{f, *})
\end{bmatrix} = \begin{bmatrix}
a_{11} \cS_1 & \ldots & a_{1n} \cS_n \\
\vdots & \ddots & \vdots \\
a_{f1} \cS_1 & \ldots & a_{fn} \cS_n
\end{bmatrix}, \label{eqn:tensor}
\end{align}
where $\mathrm{diag}(.)$ is the diagonal matrix constructed from the input vector, and $\cA_{h, *} = (a_{h1}, \ldots, a_{hn})$ is the $h$th row of $\cA$ for $h = 1, \ldots, f$. The size of $\mathcal{R}$ is $r \times n$, where $r = f \times s$.

\section{Preliminaries}
\label{sec:pre}

For consistency, we use capital calligraphic letters for binary matrices, non-capital letters for scalars, capital letters for sets, and bold letters for vectors. Here are some notations used:
\begin{enumerate}
\item $n, \bX = (x_1, \ldots, x_n)^T$: number of items and representation vector of $n$ items.
\item $\cT_j, \cT_{i, *}, \cG_{i, *}, \cM_{i,*}, \cM_j$: column $j$ of matrix $\cT$, row $i$ of matrix $\cT$, row $i$ of matrix $\cG$, row $i$ of matrix $\cM$, and column $j$ of matrix $\cM$, respectively.
\item $\supp(\bV) = \{j \mid v_j \neq 0 \}$: support set of vector $\bV = (v_1, \ldots, v_w)$.
\item $\diag(\cG_{i, *}) = \diag(g_{i1}, \ldots, g_{in})$: diagonal matrix constructed by input vector $\cG_{i, *}$.
\end{enumerate}


As we outlined in section~\ref{sec:intro}, there are two main approaches in testing design: adaptive and non-adaptive. In adaptive design, there are several stages of testing, and the outcome of a test depends on the outcomes of previous tests. Since adaptive design is time consuming, it is not preferred in practice. Therefore, we focus on non-adaptive design in which all tests are independent and designed a priori. The tests can be represented as follows. Let $n, D, H$, and $B$ be the number of items, the defective set, the inhibitor set, and the hybrid set, where $|D| \leq d$, $|H| \leq h$, $|B| \leq b$, and $1 \leq d + h + b \ll n$. Let $\bX = (x_1, \ldots, x_n)^T$ be the representative vector of $n$ items, where $x_j = 0$ means that item $j$ is negative and that item $j$ is either defective, inhibitory, or hybrid, otherwise. Note that we do not specify which values represent defective, inhibitor, and hybrid. For a $t \times n$ binary measurement matrix $\cT = (t_{ij})$, item $j$ is represented by column $\cT_j$, test $i$ is represented by row $i$, $t_{ij} = 1$ if and only if item $j$ belongs to test $i$, and $t_{ij} = 0$ otherwise.

Let $\test(S)$ be the test on subset $S \subseteq [n]$. The outcome of the test is either positive (1) or negative (0) and depends on the definition of $D$, $H$, $B$, and $S$. The tests on $n$ items using $\cT$ is defined as
\begin{equation}
\mathbf{y} = \cT \bullet \mathbf{x} = \begin{bmatrix}
\test \left( \supp(\cT_{1, *}) \cap \supp(\bX) \right) \\
\vdots \\
\test \left( \supp(\cT_{t, *}) \cap \supp(\bX) \right) \\
\end{bmatrix} = \begin{bmatrix}
y_1 \\
\vdots \\
y_t
\end{bmatrix}, \label{encMeasurement}
\end{equation}
where $y_i = \test \left( \supp(\cT_{i, *}) \cap \supp(\bX) \right)$ is the outcome of test $i$ (on subset $\supp(\cT_{i, *})$). The procedure to get $\bY$ is called \textit{the encoding procedure.} It includes the construction procedure, which is used to get measurement matrix $\cT$. The procedure to recover $\bX$ from $\bY$ and $\cT$ is called \textit{the decoding procedure.} Our objective is to design measurement matrix $\cT$ with the small number of tests such that $\bX$ can be recovered efficiently when outcome vector $\mathbf{y}$ is observed.


\section{Generalized group testing with inhibitors}
\label{sec:framework}

\subsection{Model}
\label{sub:frameModel}

In this section, we generalize GTI by introducing a new type of item: hybrid items. In GTI, defective items only tend to make the outcome of a test on them positive, and inhibitor items only tend to make the outcome of a test on them negative. Hybrid items, however, can make the outcome of a test on them either positive or negative. The theoretical model of GGTI paradigm is illustrated in Figure~\ref{fig:GGT}. It has encompassed two procedures: encoding and decoding. Encoding includes designing a measurement matrix by choosing the types of items, criteria, noise setting in the test outcomes, and testing strategy. Decoding is the classification of items on the basis of these choices.

As discussed in section~\ref{sec:intro}, there are three types of defective items, three types of inhibitor items, and at least one type of hybrid items. For the noise setting, there are two types: noisy and noiseless. Similarly, there are two types of testing design: adaptive and non-adaptive. Yet, six criteria are considered here: construction type, decoding time, number of tests, space to generate an entry, time to generate an entry, and zero-gap. Construction type is usually random, i.e., the matrix is generated randomly, or non-random, i.e., the matrix is generated nonrandomly. The ``zero-gap'' criterion is essentially helpful in biological applications. It would reduce time to pick items when making tests. Formally, a $t \times n$ matrix $\cT$ has zero-gaps if for any permutation of the rows in $\cT$, $\sum_{i = 1}^{t - 1} \mathrm{wt}(\cT_{i, *}, \cT_{i + 1, *})$ is minimum, where $\mathrm{wt}(\bU, \bV)$ is the Hamming distance of vectors $\bU$ and $\bV$.

A measurement matrix is ideal if it can be used to efficiently classify the given items in a noisy setting with a non-adaptive design, can be generated nonrandomly when there is little time and space to generate its entries, and has zero-gaps.

The numbers of choices for the types of defective items, the types of inhibitor items, the types of hybrid items, outcome setting, testing strategy, and criteria are $2^3 - 1, 2^3, 2^9, 2, 2, 2^6 - 1$, respectively. Therefore, the number of possible measurement matrices is $(2^3 - 1) \times 2^3 \times 2^9 \times 2 \times 2 \times (2^6 - 1) = 7,225,344$. Since the numbers of possible measurement matrices are extremely large, it would take time to design every measurement matrix. Therefore, a framework of encoding and decoding for the GGTI paradigm should be considered.

\subsection{Encoding procedure}
\label{sub:encGGTI}

To design a measurement matrix, we introduce notation for a perfect pair of two matrices.
\begin{definition}
Let $n, D, H, B$ be the number of items, the defective set, the inhibitor set, and the hybrid set, where $|D| \leq d$, $|H| \leq h$, $|B| \leq b$, and $1 \leq d + h + b \leq n$. Let $T = D \cup H \cup B$, $[n] \setminus T$ be the set of negative items, and $m_0 = \max \{h, d, b \}$. Let $\bV$ be an $n \times 1$ vector. Matrices $\cG$ and $\cM$ are considered to be a \textit{perfect pair} if the following conditions holds:
\begin{itemize}
\item There is an index set of rows of $\cG$, denoted $D^\prime$, such that $\supp(\cG_{i, *}) \cap T$ satisfies property $\Upsilon_i$ for $i \in D^\prime$, and $D \subseteq \bigcup_{i \in D^\prime} \left( \supp(\cG_{i, *}) \cap T \right)$. Set $\Upsilon = \{\Upsilon_1, \ldots, \Upsilon_{|D^\prime|} \}$.
\item There is an index set of rows of $\cG$, denoted $H^\prime$, such that $\supp(\cG_{i, *}) \cap T$ satisfies property $\Phi_i$ for $i \in H^\prime$, and $H \subseteq  \bigcup_{i \in H^\prime} \left( \supp(\cG_{i, *}) \cap T \right)$. Set $\Phi = \{\Phi_1, \ldots, \Phi_{|H^\prime|} \}$.
\item There is an index set of rows of $\cG$, denoted $B^\prime$, such that $\supp(\cG_{i, *})  \cap T$ satisfies property $\Psi_i$ for $i \in B^\prime$, and $B \subseteq \bigcup_{i \in B^\prime} \left( \supp(\cG_{i, *}) \cap T \right)$. Set $\Psi = \{\Psi_1, \ldots, \Psi_{|B^\prime|} \}$.
\item If $|\bV| \leq m_0$ and $\supp(\bV)$ satisfies some property in $\Upsilon$ (resp., $\Phi$, $\Psi$), vector $\bV$ can always be recovered from $\cM \bullet \bV$, and denoted $\bV = \dec_{\Upsilon}(\cM \bullet \bV, \cM)$ (resp., $\bV = \dec_{\Phi}(\cM \bullet \bV, \cM)$, $\bV = \dec_{\Psi}(\cM \bullet \bV, \cM)$).
\item If $|\bV| > m_0$ or $\supp(\bV)$ does not satisfy any property in $\Upsilon, \Phi$, or $\Psi$, then $\dec_{\Upsilon}(\cM \bullet \bV, \cM), \dec_{\Phi}(\cM \bullet \bV, \cM)$, or $\dec_{\Psi}(\cM \bullet \bV, \cM)$ either cannot return $\bV$ or returns vector $\bV^\prime \neq \bV$, where $|\bV^\prime| \leq m_0$.
\end{itemize}
\label{def:perfecPair}
\end{definition}

Note that in the above definition, matrix $\cM$ can ``handle'' all three properties, $\Upsilon, \Phi$, and $\Psi$. We thus create a measurement matrix $\cT = \cG \circledcirc \cM$, where $\circledcirc$ is defined in~\eqref{eqn:tensor}. Let $\bX = (x_1, \ldots, x_n)^T$ be the representation vector of $n$ items, where $x_j = 0$ means item $j$ is negative and item $j$ is either defective, inhibitory, or hybrid, otherwise. Note that $\supp(\bX) = T$. Then the outcome vector, $\bY = \cT \bullet \bX$, obtained by using $\bX$ and $\cT$ is

\begin{align}
\mathbf{y} = \begin{bmatrix}
\cM \times \diag(\cG_{1, *}) \\
\vdots \\
\cM \times \diag(\cG_{g, *})
\end{bmatrix} \bullet \bX 
= \begin{bmatrix}
(\cM \times \diag(\cG_{1, *}) \bullet \bX \\
\vdots \\
(\cM \times \diag(\cG_{g, *}) \bullet \bX
\end{bmatrix} = \begin{bmatrix}
\cM \bullet (\diag(\cG_{1, *}) \times \bX) \\
\vdots \\
\cM \bullet (\diag(\cG_{g, *}) \times \bX)
\end{bmatrix}
= \begin{bmatrix}
\bY_1 \\
\vdots \\
\bY_g
\end{bmatrix}, \label{eq:transform}
\end{align}
where $\bY_i = \cM \bullet (\diag(\cG_{i, *}) \times \bX)$ for $i = 1, \ldots, g$. We got~\eqref{eq:transform} because 
\begin{align}
&(\cM \times \diag(\cG_{i, *})) \bullet \bX && \nonumber \\
&= \begin{bmatrix}
\test( (\supp(\cM_{1, *}) \cap \supp(\cG_{i, *})) \cap \supp(\bX) ) \\
\vdots \\
\test( (\supp(\cM_{k, *}) \cap \supp(\cG_{i, *})) \cap \supp(\bX) )
\end{bmatrix} 
&=& \begin{bmatrix}
\test( \supp(\cM_{1, *}) \cap (\supp(\cG_{i, *}) \cap \supp(\bX))) \\
\vdots \\
\test( \supp(\cM_{k, *}) \cap (\supp(\cG_{i, *}) \cap \supp(\bX)))
\end{bmatrix} \nonumber \\
&= \begin{bmatrix}
\test( \supp(\cM_{1, *}) \cap \supp(\diag(\cG_{i, *}) \times \bX) ) \\
\vdots \\
\test( \supp(\cM_{k, *}) \cap \supp(\diag(\cG_{i, *}) \times \bX) )
\end{bmatrix}
&=& \cM \bullet (\diag(\cG_{i, *}) \times \bX). \nonumber
\end{align}

We make two examples for a perfect pair for property $\Upsilon_i, \Phi_i, \Psi_i$. First, in the CGT model, $H = B = \emptyset$. Therefore, we pay attention to only defective set $D$. In the scheme proposed by Cai et al.~\cite{cai2013grotesque}, $\Upsilon_i$ is ``The cardinality of $\supp(\cG_{i, *}) \cap T$ is one'' for every $i \in D^\prime$. Second, in the GTDI, $B = \emptyset$. In the scheme proposed by Bui et al.~\cite{bui2018sublinear}, $\Upsilon_i$ is ``The cardinality of $\supp(\cG_{i, *}) \cap T$ is one'' for every $i \in D^\prime$, and $\Phi_i$ is ``There are one known defective item and one unknown inhibitor in $\supp(\cG_{i, *}) \cap T$'' for every $i \in H^\prime$.

\subsection{Decoding procedure}
\label{sub:decGGTI}

Let a $g \times n$ matrix $\cG$ and a $k \times n$ matrix $\cM$ be a perfect pair as defined in Definition~\ref{def:perfecPair} and let $\cT = \cG \circledcirc \cM$.  We can recover $D, H, B$ with at most $g \times \max \{d, h, b \}$ misclassified items in each set by using Algorithm~\ref{alg:decodingThreshold}.

\begin{algorithm}
\caption{$\mathrm{dec}(\bY, \cM)$: Decoding algorithm for generalized group testing with inhibitors}
\label{alg:decodingThreshold}
\textbf{Input:} Outcome vector $\bY$, matrix $\cM$.\\
\textbf{Output:} Set of defectives, inhibitors, hybrid items, and possibly negative items.

\begin{algorithmic}[1]
\State $S_1 = S_2 = S_3 = \emptyset$. \label{alg:init}
\For {$i=1$ to $g$} \label{alg:scan}
	\State $S_1 = S_1 \cup \supp \left( \dec_{\Upsilon}(\bY_i, \cM) \right)$. \label{alg:decodeDefectives}
	\State $S_2 = S_2 \cup \supp \left( \dec_{\Psi}(\bY_i, \cM) \right)$. \label{alg:decodeInhibitors}
	\State $S_3 = S_3 \cup \supp \left( \dec_{\Phi}(\bY_i, \cM) \right)$. \label{alg:decodeHybrid}
\EndFor \label{alg:endScan}
\State Return $S$. \label{alg:defectiveSet}
\end{algorithmic}
\end{algorithm}

From Algorithm~\ref{alg:decodingThreshold}, we derive the following theorem

\begin{theorem}
\label{thr:frameworkGGTI}
Let $1 \leq d, h, b \leq n$ be integers and $d + h + b \leq n$. Let $n, D, H, B, [n] \setminus T = D \cup H \cup B$ be the number of items, the defective set, the inhibitor set, the hybrid set, and the set of negative items, where $|D| \leq d$, $|H| \leq h$, and $|B| \leq b$. Suppose that a $g \times n$ matrix $\cG$ and a $k \times n$ matrix $\cM$ are a perfect pair as defined in Definition~\ref{def:perfecPair} and that matrix $\cM$ can be decoded in time $O(\sfA)$. Then a measurement matrix $\cT = \cG \circledcirc \cM$ can be used to identify $D, H, B$ in time $O(g \sfA)$ with at most $g \times \max \{d, h, b \}$ misclassified items in each set recovered.
\end{theorem}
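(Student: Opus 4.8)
The plan is to analyze Algorithm~\ref{alg:decodingThreshold} directly on the block decomposition of the outcome vector established in~\eqref{eq:transform}, arguing separately about (i) the running time and (ii) the containment and misclassification guarantees for each of the three output sets. For the running time: by~\eqref{eq:transform} the vector $\bY = \cT \bullet \bX$ splits into $g$ consecutive blocks $\bY_1, \ldots, \bY_g$ with $\bY_i = \cM \bullet (\diag(\cG_{i, *}) \times \bX)$, i.e. each $\bY_i$ is exactly the outcome of applying $\cM$ to the sub-instance $\bV_i := \diag(\cG_{i, *}) \times \bX$. Algorithm~\ref{alg:decodingThreshold} makes a single pass over $i = 1, \ldots, g$ and, in iteration $i$, invokes $\dec_{\Upsilon}, \dec_{\Psi}, \dec_{\Phi}$ on $\bY_i$; each call costs $O(\sfA)$ by hypothesis, and each inserts a set of size at most $m_0 = \max\{d, h, b\} = O(\sfA)$ (the output of a decoder of $\cM$ cannot be larger than what it can write, so $\sfA = \Omega(m_0)$). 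Summing over the $g$ iterations gives total time $O(g\sfA)$.

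Next I would prove the containment $D \subseteq S_1$; the arguments for the remaining two output sets are identical after relabeling the property families $\Upsilon, \Phi, \Psi$ and the sets $D, H, B$. The key identity is $\supp(\bV_i) = \supp(\diag(\cG_{i, *}) \times \bX) = \supp(\cG_{i, *}) \cap \supp(\bX) = \supp(\cG_{i, *}) \cap T$, using $\supp(\bX) = T$. By the first bullet of Definition~\ref{def:perfecPair} there is an index set $D^\prime$ of rows of $\cG$ such that, for every $i \in D^\prime$, the set $\supp(\cG_{i, *}) \cap T = \supp(\bV_i)$ satisfies a property $\Upsilon_i \in \Upsilon$ (and, as part of that property, $|\bV_i| \le m_0$). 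The fourth bullet of Definition~\ref{def:perfecPair} then yields $\dec_{\Upsilon}(\bY_i, \cM) = \dec_{\Upsilon}(\cM \bullet \bV_i, \cM) = \bV_i$, so $\supp(\cG_{i, *}) \cap T \subseteq S_1$ after iteration $i$. Taking the union over $i \in D^\prime$ and invoking $D \subseteq \bigcup_{i \in D^\prime}\left(\supp(\cG_{i, *}) \cap T\right)$ from Definition~\ref{def:perfecPair} gives $D \subseteq S_1$.

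For the misclassification bound: whatever $\dec_{\Upsilon}(\bY_i, \cM)$ returns in iteration $i$ — the correct vector $\bV_i$ when $|\bV_i| \le m_0$ and $\supp(\bV_i)$ satisfies a $\Upsilon$-property, or, by the fifth bullet of Definition~\ref{def:perfecPair}, some vector $\bV^\prime$ with $|\bV^\prime| \le m_0$ — its support has size at most $m_0 = \max\{d, h, b\}$. Hence $|S_1| \le \sum_{i=1}^{g} \left|\supp\!\left(\dec_{\Upsilon}(\bY_i, \cM)\right)\right| \le g \cdot \max\{d, h, b\}$, and since $D \subseteq S_1$, at most $g \cdot \max\{d, h, b\}$ items of $S_1$ are non-defective, i.e. misclassified; the same bound holds for the other two sets.

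The step I expect to require the most care is precisely this case analysis over the rows $i \notin D^\prime$ (respectively $i \notin H^\prime$, $i \notin B^\prime$): the misclassification count can only be controlled by leaning on the last two bullets of Definition~\ref{def:perfecPair}, which guarantee that \emph{every} decoder call — even on a sub-instance whose support satisfies no relevant property or is too large — returns a vector of support size at most $m_0$. Everything else reduces to tracking supports through the tensor product $\circledcirc$ and a routine appeal to the perfect-pair guarantees, so no new combinatorial argument is needed.
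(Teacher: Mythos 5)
Your proposal is correct and follows essentially the same route as the paper's own proof: the same block decomposition of $\bY$ via~\eqref{eq:transform}, the same support identity $\supp(\diag(\cG_{i,*}) \times \bX) = \supp(\cG_{i,*}) \cap T$, the same appeal to the perfect-pair conditions for containment of $D, H, B$, and the same $g m_0$ bound on each output set for the misclassification count and the $O(g\sfA)$ running time. The only addition is your explicit remark that the decoder output size is absorbed into $O(\sfA)$, which the paper leaves implicit.
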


\begin{proof}
We have $\supp ( \diag(\cG_{i, *}) \times \bX) = \supp(\cG_{i, *}) \cap \supp(\bX) = \supp(\cG_{i, *}) \cap T$. Therefore, for any $i \in D^\prime$, $\diag(\cG_{i, *}) \times \bX$ satisfies property $\Upsilon_i$ because of the definition of $\cG$. Moreover, we have $|\supp ( \diag(\cG_{i, *}) \times \bX)| = |\supp(\cG_{i, *}) \cap T| \leq m_0$. Therefore, we get $\dec_{\Upsilon}(\bY_i, \cM) = \diag(\cG_{i, *}) \times \bX$ for all $i \in D^\prime$. From the definition of $\cG$, $D \subseteq \bigcup_{i \in D^\prime} \left( \supp(\cG_{i, *}) \cap T \right) = \bigcup_{i \in D^\prime} \supp( \dec(\bY_i, \cM) )$. Thus, performing Steps~\ref{alg:scan} to~\ref{alg:endScan} in Algorithm~\ref{alg:decodingThreshold} ensures that the set $S_1$ contains $D$. Similarly, it can be proved that $S_2$ and $S_3$ contain $H$ and $B$, respectively.

Because it takes time $O(\sfA)$ to perform Steps~\ref{alg:decodeDefectives} to~\ref{alg:decodeHybrid} and the number of loops in Step~\ref{alg:scan} is $g$, the running time of the algorithm is $O(g \sfA)$. To estimate the cardinality of $S_1, S_2$, and $S_3$, it is noted that $|\dec_{\Upsilon}(\bV, \cM)|$, $|\dec_{\Phi}(\bV, \cM)|$, or $|\dec_{\Psi}(\bV, \cM)|$ is not exceeded $m_0 = \max \{d, h, b \}$ for any vector $\bV$. We thus have $|S_1|, |S_2|, |S_3| \leq g m_0$. Therefore, the measurement matrix $\cT$ can be used to identify $D, H, B$ in time $O(g \sfA)$ with at most $g \times \max \{d, h, b \}$ misclassified items in each set recovered.
\end{proof}

The method presented in Theorem~\ref{thr:frameworkGGTI} is simply a possible method for creating a measurement matrix. Concrete methods for generating a measurement matrix need to be developed. Moreover, misclassified items can be removed for some instances of GGTI~\cite{bui2018efficient,bui2018efficiently,bui2019improved,bui2018sublinear}.

\section{Application to neuron classification}
\label{sec:neuro}

\subsection{Types of neurons}
\label{sub:neuro:types}

As described in section~\ref{sec:intro}, there are four types of neurons for a stimulus: excitatory, inhibitory, hybrid, and negative. A neuron generating an AP must be excitatory, inhibitory, or hybrid. Negative neurons do not generate APs. Therefore, we formally define four types of neurons:

\begin{enumerate}
\item A negative neuron does not generate an AP.
\item An excitatory neuron generates an AP and propagates that AP to another neurons.
\item An inhibitory neuron generates an AP and inhibits other neurons from generating APs.
\item A hybrid neuron generates an AP, propagates that AP to another neurons, and inhibits other neurons from generating APs.
\end{enumerate}

Although it is difficult to have a hybrid neuron for a stimulus, a hybrid neuron can be detected by observing different stimuli. For some stimuli, a hybrid neuron plays the role of an excitatory neuron. For other stimuli, a hybrid neuron plays the role of an inhibitor.

\subsection{Mapping between generalized group testing with inhibitors and classification of neurons}
\label{sub:neuro:mapping}

It is natural to do mapping between GGTI and neuron types. Excitatory neurons, inhibitory neurons, hybrid neurons, and negative neurons are represented by defective items, inhibitor items, hybrid items, and negative items in GGTI. Each type of neuron has corresponding sub-types of items in GGTI. However, because of the ``all or none'' characteristic of APs, the outcome of a test on a subset of neurons is either positive, i.e., an AP is generated corresponding to the synaptic integration of the neurons, or negative, otherwise. It is equivalent to the case that there is \textit{no gap} in GGTI for TGT, CmplxGT, GTTI, and GTCI. Because excitation and inhibition are inseparable events in the brain and synaptic integration always occurs, the model selected must have defective and inhibitor items. The model selected for a neural network is transformed into a model in GGTI, as illustrated in Figure~\ref{fig:GGT}. A corresponding measurement matrix is then generated. The measurement matrix is then used to perform tests and obtain test outcomes. Finally, the type of each neuron is identified after performing a decoding procedure.

The main advantage of using GGTI for classifying neurons is that the number of tests is usually much smaller than the number of neurons. It would save time and money when deploying the practical model.

\section{Design of tests in practice}
\label{sec:design}

\subsection{Ideal solution}
\label{sub:design:netMap}

We present an ideal solution to \textit{exactly} classify neurons. A test on a subset of neurons is deployed as follows. Let assume that all neurons in the test have synapses with \text{some neuron}, called a pooled neuron (steps 1-3 in Fig.~\ref{fig:GGT}). Then we gather data from the pooled neuron (steps 3-5 in Fig.~\ref{fig:GGT}). Because an AP is an all or none phenomenon, it is easy to convert (step 6 in Fig.~\ref{fig:GGT}) the gathered data to binary outcomes which is the step 7 in Fig.~\ref{fig:GGT}. Then the outcome of a test on those neurons in the theoretical model is equivalent to the outcome of a test on the pooled neuron.  Finally, all neurons are classified in steps 7 to 9 in Fig.~\ref{fig:GGT}.

\textbf{Discussion:} Some issues are arisen in this approach. First, the connections of the neurons in our experiments to other neurons must be constructed. Fortunately, Markram et al.~\cite{markram2015reconstruction} show the possibility that we can do that (though the number of connections and synapses is more than millions). Let assume that the connections of the neurons is achievable. The other issue is that there is no guarantee that there exists such pooled neurons in reality. Therefore, we should have a quantity to measure how close the graph induced by the neurons connections and the graph induced by the measurement matrix we used is. Moreover, the signals induced at pooled neurons may have some noise. Therefore, the measurement matrix should be error-tolerant.

\subsection{Feasible solutions}
\label{sub:design:feasible}

We present a feasible solution here. In this case, we can only classify negative neurons and non-negative neurons. Negative neurons do not generate APs while non-negative neurons generate APs. After classifying these neurons, it is possible to identify which neuron is excitatory, inhibitory, or hybrid by using the multi-electrode patch-clamp system~\cite{perin2013computer}.

Suppose that each neuron is connected and characterized by a conduction fiber. The goal of the conduction fibe is to conduct signal from the targeted neuron. In reality, we have an advanced devices to keep tracking simultaneously up to 10 neurons using multi-electrode patch-clamp system~\cite{perin2013computer}. However, a patch-clamp technique is too powerful to use in this case. We suggest to replace the multi-electrode patch-clamp system by a system of conduction fibers. It is economical and easy to implement.

A pooled neuron in section~\ref{sub:design:netMap} is now replaced by a multimeter. It is equivalent to the fact that the number of multimeters is exactly the number of tests in the theoretical model. Similar to section~\ref{sub:design:netMap}, the data gathered from the multimeters can be easily converted into binary outcomes, which are the outcomes of tests in the theoretical model, because an AP is an all or none phenomenon. Then negative and non-negative neurons can be classified.

\section{Conclusions}
\label{sec:cls}

Our motivation is to develop an efficient scheme for classifying neurons. We have first presented a generalization of group testing with inhibitors by introducing hybrid items. A framework for encoding and decoding is then presented to solve this generalization. We then apply the proposed framework for classifying neurons in theory. Finally, we show how to implement our proposed scheme in practice from an in vivo approach. It is interesting to determining the capacity for GGTI in terms of number of tests and decoding time. Moreover, it is an open problem that whether there exists a scheme such that there are no misclassified items in the recovered sets for GGTI.

\bibliographystyle{ieeetr}
\bibliography{bibli}

\end{document}